\newcommand{\zb}{{\bar z}}
\newcommand{\hb}{{\bar h}}
\title{Theorems for the Lightcone Bootstrap}
\author{Balt C. van Rees\\\normalsize{CPHT, CNRS, Ecole Polytechnique, Institut Polytechnique de Paris, Palaiseau, France}}
\date{}
\begin{document}
\maketitle

\abstract{Consider a conformally covariant four-point function of identical scalar operators with a discrete spectrum, a twist gap, and compatible with the unitarity conditions. We give a mathematical proof confirming that the spectrum and OPE coefficients at large spin and fixed twist always become that of a generalized free field theory.}

\section{Introduction}
Independently of the exact axiomatic framework one wishes to adopt, it is commonly accepted that the correlation functions of local operators in conformal field theories in Euclidean $\R^d$ are mathematically well-defined. It is therefore worthwhile to rigorously prove the wide variety of claims regarding their properties, like the lightcone bootstrap \cite{Fitzpatrick:2012yx,Komargodski:2012ek}, the Lorentzian inversion formula \cite{Caron-Huot:2017vep}, the conformal dispersion relation \cite{Carmi:2019cub}, features of Regge trajectories \cite{Costa:2017twz,Caron-Huot:2020ouj}, ANEC positivity and other properties of lightray operators \cite{Hartman:2016lgu,Kravchuk:2018htv}, the Polyakov-Regge block decomposition \cite{Mazac:2019shk}, the existence of Mellin amplitudes \cite{Mack:2009mi,Penedones:2019tng}, and dispersive functionals that are useful for the flat-space limit of theories in AdS \cite{Caron-Huot:2020adz,vanRees:2022zmr}. The feasibility of such an approach is demonstrated, for example, by prior rigorous results on OPE convergence \cite{Pappadopulo:2012jk,Qiao:2017xif}, the Wightman functions \cite{Kravchuk:2020scc,Kravchuk:2021kwe}, and the leading Regge trajectory \cite{Pal:2022vqc}.

In this work we prove the essential claim of the lightcone bootstrap formulated in 2012 \cite{Fitzpatrick:2012yx,Komargodski:2012ek}, see \cite{Kravchuk:2021kwe} for a discussion. We in particular extend the results of \cite{Pal:2022vqc} by rigorously demonstrating the existence of all the subleading double twist Regge trajectories. A corollary of our analysis is an improved understanding of the lightcone limit on the second sheet, which will be necessary for proving many of the claims in the papers listed above.

Concretely we will show that the spectrum and OPE coefficients in a CFT four-point function of identical scalar operators become those of a generalized free field theory at large spin and fixed twist, assuming unitarity, a twist gap, and a discrete spectrum. The idea is of course to approach the lightcone by taking $\zb \to 1$. We can then employ a Tauberian theorem to infer convergence of the $z$-dependent part of the correlation function to the $t$-channel identity block, which is $(1-z)^{-\Delta}$ in our conventions. Previous non-rigorous approaches then proceeded to series expand around $z = 0$, but this is not justified because the Tauberian theorem only works pointwise for $0 < z < 1$. Vitali's theorem in complex analysis however tells us that the actual domain of convergence is much larger and extends to all the other sheets that are reachable by going around $z = 0$. We then write $z = e^{x + i y}$ with $x < 0$, Fourier transform with respect to $y \in \R$ to extract the spectrum, and show that it converges distributionally to that of a generalized free field.

\section{Definitions}
\label{sec:definitions}
We will be concerned with functions $G(z,\zb)$ of the form
\begin{equation}
    \label{energydecomposition}
    G(z,\zb) = z^{-\Delta} \zb^{-\Delta} \sum_{k=0}^\infty \sum_{j = 0}^{l_k} c_{k,j} z^{\frac{E_k - l_k}{2} + j} \zb^{\frac{E_k + l_k}{2} - j}\,,
\end{equation}
with $z$ and $\zb$ independent complex variables, real symmetric coefficients $c_{k,j} = c _{k, l_k - j}\geq 0$ and real exponents $E_k \geq 0$ and $\Delta > 0$.\footnote{The positivity of the $c_{k,j}$ follows from reflection positivity, see appendix A.2 of \cite{Fitzpatrick:2012yx} or the arguments in \cite{Hogervorst:2013sma,Hogervorst:2016hal}.} By the CFT axioms, we suppose that the double sum is absolutely convergent in the punctured polydisk $0 < |z|, |\zb| < 1$. Then $G(z,\zb)$ is locally analytic in this domain, with branch cuts that extend along the negative real axes.

We will assume that the sequence $\{E_k\}$ is \emph{discrete} in the sense that $\lim_{k \to \infty} E_k = \infty$. The term with $k = 0$ is fixed to be the \emph{identity contribution} with $E_0 = 0$, $\ell_0 = 0$, and $c_{0,0} = 1$. We will also assume a \emph{twist gap} which means that $\inf_{k > 0} (E_k - l_k) > 0$.

Finally we will assume crossing symmetry:
\begin{equation}
    G(z,\zb) = G(1-z,1-\zb)\,.
\end{equation}

\section{Asymptotics}
We will explore the analytic continuation of $G(z,\zb)$ through the branch cuts along the negative real $z$ axis. So we write $z = e^w$ and let $g(w,\zb)$ be the analytic continuation of $G(e^{w},\zb)$ from the negative real $w$ axis to the entire left half plane with $\Re(w) < 0$.

Let us re-order the sums in $g(w,\zb)$ as:
\begin{align}
    \label{zbarseries}
    g(w,\zb) = \sum_{n} f_n(w) \zb^{{\bar h}_n - \Delta}, \qquad \qquad 
    f_n(w) \colonequals \sum_m d_{m,n} e^{(h_{m,n} - \Delta) w}\,.
\end{align}
Here the index $n$ labels the sequence $\{{\bar h}_n\}$, which is defined as the sequence of distinct elements in
\begin{equation}
    \left\{ \ldots, \frac{E_k - l_k}{2}, \frac{E_k - l_k}{2} + 1, \ldots \frac{E_k + l_k}{2} - 1, \frac{E_k + l_k}{2}, \frac{E_{k+1} - l_{k+1}}{2}, \ldots\right\}
\end{equation}
and $m$ then labels the different occurrences of a given ${\bar h}_n$. This provides a bijection from the terms in the sum in equation \eqref{energydecomposition} to those in the sums in equation \eqref{zbarseries}.  The coefficients $d_{m,n}$ are again non-negative.

(The sequence $\{{\bar h}_n\}$ is not monotonically increasing and is expected to contain accumulation points. Also, by the structure of the conformal blocks we know that the sum over $m$ in each $f_n(w)$ is infinite unless $n = 0$. We therefore need Fubini's theorem to justify the reshuffling in terms of the two infinite sums in equation \eqref{zbarseries}.)

The main function of interest will now be:
\begin{equation}
    F_\Lambda (w)\colonequals \Lambda^{-\Delta} \Gamma(\Delta + 1) \sum_{\{n \,:\, {\bar h}_n \leq \Lambda\}} f_n(w)\,.
\end{equation}
Note that this restricted sum generally still contains infinitely many terms.

\begin{proposition}
\label{FLambdanice}
For fixed $\Lambda$, $F_\Lambda(w)$ is finite and analytic in $w$ if $\Re(w) < 0$.
\end{proposition}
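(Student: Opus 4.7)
The plan is to prove both claims---finiteness at each $w$ and analyticity on $\Re(w)<0$---by establishing absolute and locally uniform convergence of the non-negative double series $\sum_{n:\bar h_n\leq\Lambda}\sum_m d_{m,n}\,e^{(h_{m,n}-\Delta)w}$ on compact subsets of the open left half plane. Since every summand is entire in $w$, Weierstrass' theorem on uniform limits of holomorphic functions then delivers analyticity of $F_\Lambda$, and finiteness at a fixed point is the special case of a singleton compact.

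For a compact $K\subset\{\Re(w)<0\}$, choose $0<\epsilon\leq M$ with $-M\leq\Re(w)\leq-\epsilon$ throughout $K$. Since $h_{m,n}\geq 0$, the elementary identity $|e^{(h_{m,n}-\Delta)w}|=e^{(h_{m,n}-\Delta)\Re(w)}$ combined with a case split on $h_{m,n}\geq\Delta$ versus $h_{m,n}<\Delta$ produces a uniform bound of the form
\begin{equation*}
|e^{(h_{m,n}-\Delta)w}|\leq C_K\,e^{-\tfrac12\epsilon(h_{m,n}-\Delta)}, \qquad w\in K,
\end{equation*}
with $C_K$ depending only on $K$ and $\Delta$ (one checks that $C_K=e^{M\Delta}$ suffices, the bounded range $0\leq h_{m,n}<\Delta$ being absorbed into the constant). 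To convert the twist cutoff $\bar h_n\leq\Lambda$ into something already known to be summable, I would pick any $\bar z^{\ast}\in(0,1)$ and exploit that $(\bar z^{\ast})^{\bar h_n-\Lambda}\geq 1$, since the exponent is non-positive and the base is less than one. Multiplying each term of the restricted sum by this harmless factor, and then dropping the restriction $\bar h_n \leq \Lambda$ (all summands are non-negative), yields
\begin{equation*}
\sum_{n:\bar h_n\leq\Lambda}\sum_m d_{m,n}\,|e^{(h_{m,n}-\Delta)w}|\leq C_K\,(\bar z^{\ast})^{\Delta-\Lambda}\sum_{n,m}d_{m,n}\,e^{-\tfrac12\epsilon(h_{m,n}-\Delta)}\,(\bar z^{\ast})^{\bar h_n-\Delta}=C_K\,(\bar z^{\ast})^{\Delta-\Lambda}\,G(e^{-\epsilon/2},\bar z^{\ast}),
\end{equation*}
and the right-hand side is finite because $(e^{-\epsilon/2},\bar z^{\ast})$ lies in the punctured polydisk where the original series of \eqref{energydecomposition} is absolutely convergent by hypothesis (the passage to the $(m,n)$-indexed form being justified by the same Fubini argument already flagged in the main text for \eqref{zbarseries}).

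The heart of the argument, and the only genuinely non-obvious step, is this weight-insertion trick: without it one directly controls only the full sum $\sum_n f_n(w)(\bar z^{\ast})^{\bar h_n-\Delta}$ rather than the partial sum $\sum_{\bar h_n\leq\Lambda}f_n(w)$, which may still contain infinitely many terms because of accumulation points in $\{\bar h_n\}$. Once the uniform bound is in place, the Weierstrass M-test closes the argument and I do not anticipate further obstacles.
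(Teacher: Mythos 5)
Your proposal is correct and mirrors the paper's own argument: the key weight-insertion step $(\zb^{\ast})^{\hb_n-\Lambda}\geq 1$, which converts the cutoff sum into the full series summable via $G(\cdot,\zb^{\ast})$, is exactly the paper's chain $\sum_{\hb_n\leq\Lambda}f_n(w)\leq\sum_n f_n(w)(1/2)^{\hb_n-\Lambda}=2^{\Lambda-\Delta}g(w,1/2)$ with $\zb^{\ast}=1/2$, and the passage from locally uniform absolute convergence to analyticity is the same. Your explicit case split on $h_{m,n}\gtrless\Delta$ to produce the uniform bound $C_K e^{-\epsilon(h_{m,n}-\Delta)/2}$ is just a more spelled-out version of the paper's remark that the convergence is ``never slower than at the rightmost point in the subset,'' which tacitly absorbs the finitely many low-$h$ terms into a constant exactly as you do.
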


\begin{proof}
Consider first fixed real $w < 0$. Here each term in $f_n(w)$ and therefore in $F_\Lambda(w)$ is positive, and the full sum converges because
\begin{equation}
    \sum_{\{n \,:\, {\bar h}_n \leq \Lambda\}} f_n(w) \leq \sum_{\{n \,:\, {\bar h}_n \leq \Lambda\}} f_n(w) \left(1/2\right)^{{\bar h}_n - \Lambda} \leq \sum_{n} f_n(w) \left(1/2\right)^{{\bar h}_n - \Lambda} =  2^{\Lambda- \Delta} g(w, 1/2)\,.
\end{equation}
The full sum is then also absolutely convergent for any $w$ in the left half plane, because $|e^{(h_{m,n} - \Delta) w}| = e^{(h_{m,n} - \Delta)\Re(w)}$ and therefore $|f_n(w)| \leq f_n(\Re(w))$. Furthermore, on compact subsets the convergence of the sum is uniform because it is never slower than at the rightmost point in the subset. Therefore, $F_\Lambda(w)$ is analytic in $w$.
\end{proof}

Now as $\zb \to 1$ the behavior of $G(z,\bar z)$ is dominated by the aforementioned identity contribution:
\begin{equation}
    \label{tchannelasymptotics}
    G(z,\zb) = (1-\zb)^{-\Delta}\left( (1- z)^{-\Delta}  + O((1-\zb)^{\hat h})\right)\,,
\end{equation}
with $\hat h \colonequals \inf_{n > 0} {\bar h}_n > 0$ by the twist gap assumption. This expression uses crossing symmetry so it holds pointwise for any $z$ as long as $0 < |1-z| < 1$.

\begin{proposition}
    \label{realaxislimit}
    Pointwise for any real $w < 0$,
    \begin{equation}
    \begin{split}
        \lim_{\Lambda \to \infty} F_\Lambda (w) = (1- e^{w})^{-\Delta}\,.
    \end{split}
    \end{equation}        
\end{proposition}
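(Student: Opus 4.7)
My plan is to recognize Proposition \ref{realaxislimit} as an instance of a classical Tauberian theorem. Fix real $w<0$ throughout. Because the coefficients $d_{m,n}$ are non-negative and $w$ is real, each term $e^{(h_{m,n}-\Delta)w}$ is positive, hence $f_n(w)\geq 0$ and I may regard $\mu_w \colonequals \sum_n f_n(w)\,\delta_{\bar h_n}$ as a positive Radon measure on $[0,\infty)$. The claim is then equivalent to the cumulative asymptotic
\[
\mu_w\!\left([0,\Lambda]\right) \;\sim\; \frac{\Lambda^\Delta}{\Gamma(\Delta+1)}\,(1-e^w)^{-\Delta}, \qquad \Lambda \to \infty,
\]
and the input will be the behaviour of the Laplace transform of $\mu_w$ near $s=0$, which is supplied by crossing symmetry through \eqref{tchannelasymptotics}.

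Concretely, I would substitute $\bar z=e^{-s}$ into \eqref{zbarseries} to identify
\[
\int_0^\infty e^{-sx}\,d\mu_w(x) \;=\; \sum_n f_n(w)\,e^{-s\bar h_n} \;=\; e^{-s\Delta}\, g(w,e^{-s}),
\]
which is finite for every $s>0$ by absolute convergence of the expansion on the punctured disk $0<\bar z<1$. Substituting the crossing-channel asymptotic \eqref{tchannelasymptotics} with $z=e^w\in(0,1)$ fixed, and using $1-e^{-s}=s+O(s^2)$ and $e^{-s\Delta}=1+O(s)$, the right-hand side becomes
\[
\int_0^\infty e^{-sx}\,d\mu_w(x) \;=\; s^{-\Delta}\,(1-e^w)^{-\Delta}\,\bigl(1+o(1)\bigr) \qquad (s\to 0^+),
\]
the correction being $O(s^{\hat h-\Delta})$ with $\hat h>0$ strictly smaller in exponent than the leading $s^{-\Delta}$ by the twist gap assumption.

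With this asymptotic in hand I invoke the Karamata--Hardy--Littlewood Tauberian theorem for Laplace--Stieltjes transforms of positive Radon measures with index of regular variation $\rho=\Delta>0$: the asymptotic for the transform is equivalent to the cumulative asymptotic displayed above, which after multiplying by $\Lambda^{-\Delta}\Gamma(\Delta+1)$ is exactly the statement $F_\Lambda(w)\to (1-e^w)^{-\Delta}$. The positivity of $\mu_w$ is essential here: it is the Tauberian hypothesis that upgrades the Abelian statement about the transform into a statement about the distribution function of the $\bar h_n$.

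The main obstacle, in my view, is not the invocation of Karamata's theorem but the careful justification that the remainder in \eqref{tchannelasymptotics} really does translate into an $o(s^{-\Delta})$ correction along $\bar z=e^{-s}$ with $s\downarrow 0$. The asymptotic \eqref{tchannelasymptotics} is proved by expanding $G$ in its crossed channel around $1-\bar z=0$ at fixed $z=e^w$, and I would need to check that the implicit constant in the $O((1-\bar z)^{\hat h})$ error is uniform for $\bar z$ in a one-sided neighbourhood of $1$. This should follow from absolute convergence of the crossed-channel OPE at $(1-z,1-\bar z)$, since $|1-z|<1$ is fixed and $|1-\bar z|$ can be made arbitrarily small. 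Once this uniformity is in place, the Tauberian step is standard.
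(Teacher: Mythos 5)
Your argument is correct and is essentially the paper's own proof: for fixed real $w<0$ positivity makes $\Lambda \mapsto \sum_{\{n : \hb_n \leq \Lambda\}} f_n(w)$ a non-decreasing, right-continuous function vanishing for $\Lambda<0$, crossing symmetry via \eqref{tchannelasymptotics} gives the $\zb\nearrow 1$ behaviour of its Laplace--Stieltjes transform $\zb^\Delta g(w,\zb)$, and the Hardy--Littlewood Tauberian theorem (Theorem \ref{hlttheorem} in the appendix) then delivers the claim; your substitution $\zb = e^{-s}$ is just a change of kernel and changes nothing. The uniformity issue you raise in your final paragraph is not an obstacle here: the proposition needs only the pointwise limit at the single fixed value $z=e^w$, which is exactly what \eqref{tchannelasymptotics} supplies, the error bound following from absolute convergence of the crossed-channel expansion near $\zb=1$ at that fixed $z$.
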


\begin{proof}
For fixed real $w < 0$ both \eqref{tchannelasymptotics} and \eqref{zbarseries} are valid, and of course $g(w,\zb) = G(e^w,\zb)$. In the same region $F_\Lambda(w)$ is positive and non-decreasing as a function of $\Lambda$, so we can define integrals using $dF_\Lambda$. The proposition is therefore a direct consequence of the Hardy-Littlewood Tauberian theorem \ref{hlttheorem} as stated in the appendix.
\end{proof}

\begin{theorem}
\label{everywherelimit}
For any $w$ with $\Re(w) < 0$,
\begin{equation}
    \begin{split}
        \lim_{\Lambda \to \infty} F_\Lambda (w) = (1- e^{w})^{-\Delta}\,,
    \end{split}
\end{equation}
with the limit being uniform on compact subsets.
\end{theorem}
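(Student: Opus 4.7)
The plan is to invoke the Vitali--Porter theorem: a locally uniformly bounded sequence of holomorphic functions on a domain that converges pointwise on a subset with an accumulation point must converge locally uniformly on the whole domain. Proposition~\ref{FLambdanice} already supplies analyticity of each $F_\Lambda$ on $\{\Re(w)<0\}$, and Proposition~\ref{realaxislimit} supplies pointwise convergence on the negative real axis, which has accumulation points in the domain. What remains is to establish local uniform boundedness of the family $\{F_\Lambda\}$ on $\{\Re(w)<0\}$ for $\Lambda$ sufficiently large.

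A first reduction uses the inequality $|F_\Lambda(w)|\leq F_\Lambda(\Re(w))$, which follows from $|e^{(h_{m,n}-\Delta)w}|=e^{(h_{m,n}-\Delta)\Re(w)}$ and $d_{m,n}\geq 0$; this reduces the task to bounding $F_\Lambda(x)$ uniformly in $\Lambda$ for $x$ in a compact interval $[a,b]\subset(-\infty,0)$. Here I would use the following comparison trick: for any auxiliary $\zb_0\in(0,1)$ one has $\zb_0^{\bar h_n-\Lambda}\geq 1$ whenever $\bar h_n\leq\Lambda$, so extending the sum to all $n$ yields
\begin{equation*}
F_\Lambda(x) \leq \Lambda^{-\Delta}\,\Gamma(\Delta+1)\,\zb_0^{\Delta-\Lambda}\,g(x,\zb_0)\,.
\end{equation*}
Choosing $\zb_0=e^{-c/\Lambda}$ for a fixed $c>0$ turns $\zb_0^{-\Lambda}=e^c$ into a $\Lambda$-independent constant, while the $t$-channel asymptotic~\eqref{tchannelasymptotics} makes $g(x,\zb_0)$ behave as $(\Lambda/c)^{\Delta}(1-e^x)^{-\Delta}$. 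The two powers of $\Lambda$ cancel and leave $F_\Lambda(x)\leq C(1-e^x)^{-\Delta}$ for $\Lambda$ large, with $C$ depending only on $c$, $\Delta$, and the compact set. Since $(1-e^w)^{-\Delta}$ is itself bounded on any compact subset of $\{\Re(w)<0\}$, this combined with the earlier reduction gives the required local uniform boundedness.

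The main technical point I anticipate is verifying that the asymptotic~\eqref{tchannelasymptotics}, which was stated as a pointwise statement in $z$, holds uniformly in $x$ on the compact interval $[a,b]$: the constant implicit in $O((1-\zb)^{\hat h})$ must not depend on $x$. This should follow by rewriting $g$ via the $t$-channel expansion, factoring out $((1-\zb)/(1-\zb_*))^{\hat h}$ against some fixed $\zb_*\in(0,1)$, and then using that the remaining sum over subleading $t$-channel contributions is continuous in $x$ by absolute convergence of the $t$-channel series on compact subsets of the polydisk $\{0<|1-z|,|1-\zb|<1\}$. Once local uniform boundedness is in place, Vitali's theorem promotes the pointwise convergence from Proposition~\ref{realaxislimit} to locally uniform convergence throughout $\{\Re(w)<0\}$, with limit equal (by analytic continuation) to $(1-e^w)^{-\Delta}$.
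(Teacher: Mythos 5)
Your overall plan — analyticity of each $F_\Lambda$ (Proposition~\ref{FLambdanice}), pointwise convergence on the negative real axis (Proposition~\ref{realaxislimit}), local uniform boundedness, and then Vitali--Porter — coincides exactly with the paper's. The one place where you take a genuinely different route is in establishing local uniform boundedness. The paper observes that $e^{w\Delta}F_\Lambda(w)$ has nonnegative Dirichlet coefficients with nonnegative exponents $h_{m,n}$, hence $x\mapsto e^{x\Delta}F_\Lambda(x)$ is monotone nondecreasing on $(-\infty,0)$; combined with the triangle inequality $|e^{w\Delta}F_\Lambda(w)|\leq e^{\Re(w)\Delta}F_\Lambda(\Re(w))$, this collapses the supremum over a compact $K\subset\{\Re(w)\leq\epsilon\}$ to the single point $w=\epsilon$, where boundedness for large $\Lambda$ is free from the existence of the limit in Proposition~\ref{realaxislimit}. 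You instead re-derive boundedness directly from the $t$-channel asymptotic by comparing with $g(x,\zb_0)$ at the $\Lambda$-scaled point $\zb_0=e^{-c/\Lambda}$, which trades $\zb_0^{-\Lambda}$ against $(1-\zb_0)^{-\Delta}\sim(\Lambda/c)^\Delta$ to cancel the $\Lambda^{-\Delta}$ prefactor; this is a clever, more hands-on version of the $\zb_0=1/2$ comparison already used in Proposition~\ref{FLambdanice}. Your worry about uniformity in $x$ of the $O((1-\zb)^{\hat h})$ error is legitimate but resolves exactly as you sketch: factoring $(1-\zb)^{\bar h_n}\leq(1-\zb)^{\hat h}(1-\zb_*)^{\bar h_n-\hat h}$ for any fixed $\zb_*\leq\zb$ reduces the tail to $(1-\zb_*)^{-\hat h}\bigl[(1-\zb_*)^\Delta G(e^x,\zb_*)-(1-e^x)^{-\Delta}\bigr]$, continuous in $x$ on compacts. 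The paper's monotonicity shortcut is tidier and sidesteps that uniformity issue entirely, whereas your argument is more explicit about \emph{why} the family is bounded and does not re-invoke the Tauberian theorem for the bound; both are correct.
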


\begin{proof}
Consider a compact subset $K$ such that $\Re(w) \leq \epsilon < 0$ for all $w \in K$. Without loss of generality we will take $K$ to be connected and contain a segment of the real line. Then within $K$ we can use
\begin{equation}
    \label{Flambdabounded}
    | e^{w \Delta} F_\Lambda(w)| \leq  e^{\Re(w) \Delta} F_\Lambda(\Re(w)) \leq  e^{\epsilon \Delta} F_\Lambda(\epsilon)\,,
\end{equation}
where the last inequality uses that $e^{w \Delta} F_\Lambda(w)$ is monotonically increasing along the negative real axis. Furthermore, for sufficiently large $\Lambda$ we know that $e^{\epsilon \Delta} F_\Lambda(\epsilon) < C_\epsilon$ for some $\Lambda$-independent constant $C_\epsilon$ since otherwise the limit of proposition \ref{realaxislimit} would not exist for $w = \epsilon$. Therefore on $K$ we know that $F_\Lambda(w)$ is bounded uniformly in both $\Lambda$ and $w$. But then our claim follows immediately from  proposition \ref{realaxislimit} and Vitali's convergence theorem \ref{vitalithm}.
\end{proof}

\section{Consequences}

\subsubsection*{Monomial twist density}
Consider $\phi \in \cD(\R)$, a compactly supported infinitely smooth test function. We define the \emph{monomial twist density}, a distribution $H_\Lambda \in \cD'(\R)$ as
\begin{equation}
    \label{twistdensitydefinition}
    \int d\eta \, \phi(\eta) H_\Lambda(\eta) \colonequals  \frac{e^{-\Delta}\Lambda^\Delta}{\Gamma(\Delta + 1)} \int dy \, \left[  \int \frac{d\eta}{2\pi} \phi(\eta) e^{\eta - i y \eta} \right] e^{i \Delta y} F_\Lambda(-1 + i y)\,.
\end{equation}
The inner integral is the Fourier transform of an element of $\cD(\R)$ and therefore falls off faster than any power at large $|y|$. The outer integral is then also well-defined, because $F_\Lambda(-1 + iy)$ as a function of $y \in \R$ is both smooth (by proposition \ref{FLambdanice}) and bounded (by equation \eqref{Flambdabounded}).

The significance of the monomial twist density is that it can be written as:
\begin{equation}
    \label{HLambdaeta}
    H_\Lambda(\eta) = \sum_{\{ n : {\bar h}_n \leq \Lambda \}} \sum_m d_{m,n}\delta(\eta - h_{m,n}) \,.
\end{equation}
Note that there are finitely many non-zero terms in the sum with support below any fixed value of $\eta$, since $E = h_{m,n} + {\bar h}_n \leq \eta + \Lambda$ and we assumed that $\{E_k\}$ is discrete. Our main claim is now that
\begin{equation}
    \lim_{\Lambda \to \infty} \Lambda^{-\Delta}\int d\eta \, \phi(\eta)H_\Lambda(\eta)  =
    \sum _{m=0}^{\infty }   \frac{\Gamma (m+\Delta )}{m! \Gamma (\Delta )\Gamma(\Delta + 1)} \phi(\Delta + m)\,.
\label{opespectrumlimit}
\end{equation}
Indeed, the dominated convergence theorem justifies swapping the limit with the integral over $y$, and the rest is calculus. The right-hand side of equation \eqref{opespectrumlimit}, when multiplied by $\Lambda^{\Delta}$, describes the asymptotic behavior of the monomial twist density for $G(z,\zb) = (1-z)^{-\Delta}(1-\zb)^{-\Delta}$. This leads to our main conclusion, valid for any function obeying the properties given in section 2: as an element of $\cD'(\R)$, the monomial twist density becomes that of the generalized free field to leading order at large $\Lambda$.
% \begin{equation}
%     G^{\text{t-id}}(z,\zb) = (1-\zb)^{-\Delta} (1-z)^{-\Delta} \quad \Rightarrow \quad \lim_{\Lambda \to \infty} \Lambda^{-\Delta} H^{\text{t-id}}_\Lambda(\eta) = \sum_{m = 0}^\infty \frac{\Gamma(\Delta + m)}{m! \Gamma(\Delta)\Gamma(\Delta + 1)}  \delta(\eta - \Delta - m)\,.
% \end{equation}

Suppose now that $\phi(\eta)$ is compactly supported in the interval $[\Delta + k - \epsilon, \Delta + k + \epsilon]$ with $k\in \N_0$ and $\epsilon > 0$. The right-hand side of equation \eqref{opespectrumlimit} is finite, but the left-hand side goes to zero unless there exist infinitely many monomials $z^{h_{m,n}}\zb^{\hb_n}$ for which $h_{m,n}$ lies in that interval. The sum of the coefficients of these terms must furthermore behave as dictated by the generalized free field. This establishes the existence of the so-called double-twist families of monomials in $z$ and $\zb$.

In appendix \ref{appendixblocks} we use the structure of conformal blocks and a Tauberian theorem of \cite{Qiao:2017xif} to extend this result from monomials to conformal blocks.

\subsubsection*{Lightcone limit on the second sheet}
Theorem \ref{everywherelimit} controls the asymptotics of $F_\Lambda(w)$ for any $w$ with $\Re(w) < 0$. In this region we can then employ the Abelian theorem \ref{abeliantheorem} to deduce that:
\begin{equation}
    \label{positionspacelimit}
    \lim_{\zb \nearrow 1} (1-\zb)^\Delta g(w,\zb) =  (1 - e^w)^{-\Delta}\,.
\end{equation}
This limit holds uniformly on compact subsets, and is now seen to apply far beyond the original domain of validity of equation \eqref{tchannelasymptotics}. Equation \eqref{positionspacelimit} directly constrains the lightcone limit on the second sheet. For example, if we define
\begin{equation}
    \text{dDisc}_s g(w,\bar z) \colonequals g(w,\bar z) - \frac{1}{2} \left( g(w + 2 \pi i, \bar z) + g(w - 2 \pi i, \bar z) \right)
\end{equation}
then we can conclude that
\begin{equation}
    \lim_{\zb \nearrow 1} (1-\zb)^\Delta \text{dDisc}_s g(w,\zb) = 0\,.
\end{equation}
In the future we plan to use this result as a stepping stone towards proving several of the claims in the works mentioned in the introduction.

Equation \eqref{positionspacelimit} can also be obtained more straightforwardly by considering $(1-\zb)^{\Delta} g(w,\zb)$ as a family of functions labeled by $0 < \zb < 1$. For $\Re(w) < 0$ each member of the family is holomorphic and as a whole the family is locally bounded (as one may easily verify). Therefore Vitali's theorem applies again, and we can infer the validity of equation \eqref{positionspacelimit} everywhere in the left half $w$ plane because it holds along the negative real $w$ axis. This simpler proof however only works at the leading order, and for subleading terms the use of the Abelian theorem seems unavoidable.

\section{Outlook}
It will be interesting to understand the size of the corrections to the results presented here. The best estimates will likely utilize our knowledge of the complex $\zb$ plane and rely on one of the Tauberian theorems with remainder proven by Subhankulov \cite{subhankulov}.\footnote{See \cite{Mukhametzhanov:2018zja} for a translation of some theorems and the first application of these results to conformal correlators.} We imagine this will allow us to estimate corrections for the behavior of $F_\Lambda(w)$ and $H_\Lambda(\eta)$ as described in theorem \ref{abeliantheorem} and in equation \eqref{opespectrumlimit}. We would also like to investigate whether equation \eqref{opespectrumlimit} can be useful for proving dispersion relations in Mellin space \cite{Penedones:2019tng,Caron-Huot:2020adz}.

\section*{Acknowledgments}
The author is grateful to Petr Kravchuk, Dalimil Mazac, Thomas Pochart, Slava Rychkov and especially Jiaxin Qiao for valuable comments on the draft. He acknowledges funding from the European Union (ERC “QFTinAdS”, project number 101087025).

\appendix

\section{Mathematical results}

The following is almost literally theorem I.15.3 from \cite{korevaar2004tauberian}, to which we refer for a proof.

\begin{theorem}[Hardy-Littlewood Tauberian theorem for the Laplace transform]
\label{hlttheorem}
Let $F(\Lambda)$ vanish for $\Lambda < 0$, be non-decreasing, continuous from the right and such that the Stieltjes integral
\begin{equation}
    g(\zb) = \int_{0-}^{\infty} \zb^{v} dF(v)
\end{equation}
exists for $0 < \zb < 1$. (The lower bound $0-$ means $\epsilon \nearrow 0$.) Suppose that for some constant $\Delta > 0$,
\begin{equation}
    \label{gbarz}
    \lim_{\zb \nearrow 1}\,  (1-\zb)^{\Delta} g(\zb) = 1\,.
\end{equation}
Then 
\begin{equation}
    \label{asymptoticmeasure}
    \lim_{\Lambda \to \infty} \, \Lambda^{-\Delta} \Gamma(\Delta + 1) F(\Lambda) = 1\,.
\end{equation}
\end{theorem}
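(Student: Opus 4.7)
The plan is to follow Karamata's classical strategy for Tauberian theorems of the Laplace transform. First I would substitute $\zb = e^{-s}$, so that $\zb \nearrow 1$ becomes $s \searrow 0$ and $1 - \zb = s + O(s^2)$ recasts the hypothesis \eqref{gbarz} as
\begin{equation}
    \lim_{s \searrow 0}\, s^\Delta \int_{0-}^\infty e^{-sv}\, dF(v) = 1.
\end{equation}
Via $\Lambda = 1/s$ the conclusion \eqref{asymptoticmeasure} becomes $s^\Delta F(1/s) \to 1/\Gamma(\Delta+1)$. To reach it I would introduce the rescaled positive measures $\mu_s$ on $[0,\infty)$ defined by $\int \psi(t)\, d\mu_s(t) \colonequals s^\Delta \int_{0-}^\infty \psi(sv)\, dF(v)$, together with the finite variants $\tilde\mu_s(dt) \colonequals e^{-t}\, \mu_s(dt)$, whose total masses $\tilde\mu_s([0,\infty))$ tend to $1$ by the rewritten hypothesis.

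Next I would show that $\tilde\mu_s$ converges weakly to $d\tilde\mu_\infty(t) \colonequals e^{-t}\, t^{\Delta-1}/\Gamma(\Delta)\, dt$. Applying the hypothesis at $(k+1)s$ in place of $s$, for every integer $k \geq 0$ one obtains $\int e^{-kt}\, d\tilde\mu_s(t) \to (k+1)^{-\Delta} = \int e^{-kt}\, d\tilde\mu_\infty(t)$. The Stone-Weierstrass theorem promotes this to every polynomial in $e^{-t}$ and, using the uniform bound on total masses, to every continuous function of $e^{-t}$ on $[0,1]$. The target quantity $\mu_s([0,1]) = \int e^{t}\, \mathbf{1}_{[0,1]}(t)\, d\tilde\mu_s(t)$ is an integral against $\psi(x) \colonequals x^{-1}\mathbf{1}_{[e^{-1},1]}(x)$ with $x = e^{-t}$: a bounded function that jumps at $x = e^{-1}$. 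Positivity of $\tilde\mu_s$, inherited from the monotonicity of $F$, now allows me to sandwich $\psi$ between continuous functions $\psi_-^\varepsilon \leq \psi \leq \psi_+^\varepsilon$ whose $\tilde\mu_\infty$-integrals differ by at most $\varepsilon$. Sending $s \searrow 0$ first and $\varepsilon \searrow 0$ second squeezes $\mu_s([0,1])$ to $\int_0^1 t^{\Delta-1}/\Gamma(\Delta)\, dt = 1/\Gamma(\Delta+1)$, which delivers \eqref{asymptoticmeasure} upon setting $\Lambda = 1/s$.

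The main obstacle is this last passage from continuous to discontinuous test functions, which genuinely relies on the Tauberian hypothesis: for a signed or non-monotone $F$ the measures $\mu_s$ need not be positive, the sandwich argument collapses, and the Laplace asymptotic conveys no pointwise information about $F$ whatsoever. A lesser subtlety to be addressed in passing is the tightness of the finite measures $\tilde\mu_s$ on the noncompact halfline: because $\tilde\mu_s([0,\infty)) \to 1 = \tilde\mu_\infty([0,\infty))$, no mass escapes to infinity and the uniform polynomial approximation on $[0,1]$ is valid for continuous test functions. The right-continuity of $F$ enters only to make $\mu_s([0,1]) = s^\Delta F(1/s)$ an exact identity at the endpoint $\Lambda = 1/s$.
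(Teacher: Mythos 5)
Your proof is correct, and it is exactly Karamata's classical method: rescale the Stieltjes measure, show weak convergence of the exponentially damped measures against polynomials in $e^{-t}$ via the hypothesis applied at $(k+1)s$, upgrade to all continuous functions of $e^{-t}$ by Weierstrass approximation and the uniform mass bound, and finally use positivity of $dF$ to sandwich the discontinuous target $\psi(x) = x^{-1}\mathbf{1}_{[e^{-1},1]}(x)$. The paper does not give its own proof of this theorem — it states it as Theorem I.15.3 of Korevaar and refers there — and Korevaar's proof in Chapter~I, \S15 is precisely this Karamata argument, so your account coincides with the cited source. You also correctly pinpoint where the Tauberian hypothesis (monotonicity of $F$, hence positivity of the rescaled measures) is indispensable, namely in passing from continuous to discontinuous test functions; the remaining details (equivalence of $(1-\bar z)^\Delta \sim s^\Delta$ under $\bar z = e^{-s}$, the moment identity $\int e^{-kt}\,d\tilde\mu_\infty = (k+1)^{-\Delta}$, and $\mu_s([0,1]) = s^\Delta F(1/s)$ via right-continuity and $F(0^-)=0$) all check out.
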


Tauberian theorems use the behavior of the integral to deduce a property of the integrand. Results in the opposite direction are called Abelian theorems, and they are easier to prove. In the main text we use the following theorem. Note that $F(\Lambda)$ is no longer required to be monotonic.

\begin{theorem}
\label{abeliantheorem}
Let $F(\Lambda)$ vanish for $\Lambda < 0$, continuous from the right and of bounded variation such that the Stieltjes integral 
\begin{equation}
    g(\zb) = \int_{0-}^{\infty} \zb^{v} dF(v)
\end{equation}
exists for $0 < \zb < 1$. If $F(\Lambda)$ furthermore obeys equation \eqref{asymptoticmeasure}, then $g(\zb)$ obeys equation \eqref{gbarz}.
\end{theorem}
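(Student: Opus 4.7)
The plan is to reduce the statement to a dominated convergence argument after an integration by parts. I would substitute $\zb = e^{-s}$ so that $s \to 0^+$ corresponds to $\zb \nearrow 1$, and since $(1 - e^{-s})/s \to 1$, the target \eqref{gbarz} is equivalent to showing $s^{\Delta} g(e^{-s}) \to 1$ as $s \to 0^+$.

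Next I would integrate by parts in the Stieltjes integral. Using that $F$ is right-continuous and locally of bounded variation, vanishes for $v < 0$, and (by the Tauberian hypothesis \eqref{asymptoticmeasure}) grows at most polynomially in $v$, one obtains
\begin{equation}
    g(e^{-s}) = \int_{0-}^{\infty} e^{-sv} dF(v) = s \int_0^{\infty} e^{-sv} F(v)\, dv\,,
\end{equation}
with the boundary term at $0-$ vanishing because $F(0-) = 0$ and the one at infinity vanishing because $e^{-sv} F(v) \to 0$ by polynomial growth. Rescaling with $u = sv$ then gives
\begin{equation}
    s^{\Delta} g(e^{-s}) = s^{\Delta} \int_0^{\infty} e^{-u} F(u/s)\, du\,.
\end{equation}

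At this point the hypothesis provides the pointwise limit $s^{\Delta} F(u/s) \to u^{\Delta}/\Gamma(\Delta+1)$ for each fixed $u > 0$, and it remains to bring the limit inside the integral. For this I would produce a uniform dominator by combining the asymptotic (which supplies, say, $F(v) \leq 2 v^{\Delta}/\Gamma(\Delta+1)$ for $v$ beyond some threshold $V$) with the local boundedness of $F$ on $[0,V]$ inherited from bounded variation, yielding a constant $C$ with $F(v) \leq C(1 + v^{\Delta})$ for all $v \geq 0$. Then, for $s \leq 1$,
\begin{equation}
    s^{\Delta} e^{-u} F(u/s) \leq C e^{-u}\bigl(s^{\Delta} + u^{\Delta}\bigr) \leq C e^{-u}\bigl(1 + u^{\Delta}\bigr)\,,
\end{equation}
which is integrable on $[0,\infty)$. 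The dominated convergence theorem then gives
\begin{equation}
    \lim_{s \to 0^+} s^{\Delta} g(e^{-s}) = \int_0^{\infty} e^{-u}\, \frac{u^{\Delta}}{\Gamma(\Delta+1)}\, du = 1\,,
\end{equation}
as desired.

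The only step requiring real care is the integration by parts, specifically the vanishing of the boundary term at infinity; this is precisely where the polynomial-growth consequence of the Tauberian hypothesis is used. Everything after that is routine. I do not expect this direction to be subtle: Abelian theorems are generally much easier than their Tauberian counterparts because no monotonicity or one-sided hypothesis on $F$ must be exploited, and only absolute integrability bounds are needed.
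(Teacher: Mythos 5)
Your proof is correct and reaches the same conclusion by a genuinely different (though equally standard) route. Both you and the paper begin with the substitution $\zb = e^{-s}$ and the integration-by-parts identity $g(e^{-s}) = s\int_0^\infty e^{-sv} F(v)\,dv$, justified by $F(0-)=0$ and the polynomial growth implied by \eqref{asymptoticmeasure}. After that the methods diverge: you rescale $u = sv$ and apply the dominated convergence theorem with the dominator $C e^{-u}(1+u^\Delta)$, obtained by combining the asymptotic bound for large $v$ with local boundedness of $F$ (from bounded variation) on the complementary compact set. The paper instead subtracts off the exact contribution of the model function $v^\Delta/\Gamma(\Delta+1)$, writes the difference as $t^{\Delta+1}\int_0^\infty\bigl(F(v)-v^\Delta/\Gamma(\Delta+1)\bigr)e^{-vt}\,dv$, splits the integral at a threshold $v_*$ furnished by the hypothesis, and estimates each piece by hand (the tail is $\leq\epsilon/2$, the head is $O(t^{\Delta+1})$). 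Your route buys a shorter and more conceptual finish at the cost of invoking DCT and the global bound $|F(v)|\leq C(1+v^\Delta)$; the paper's route is more elementary in its toolbox and keeps the error tracking explicit, which may be convenient for later quantitative refinements. One small slip in your write-up: since $F$ is only of bounded variation (not nonnegative), the dominator estimate should be phrased in terms of $|F(v)| \leq C(1+v^\Delta)$ rather than $F(v) \leq C(1+v^\Delta)$; the justification you give in fact yields the absolute-value version, so this is a notational correction rather than a gap.
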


\begin{proof}
We set $\zb = e^{-t}$ with $t > 0$ and will show that $\lim_{t \searrow 0} t^\Delta g(e^{-t}) = 1$. Using the ability \cite{korevaar2004tauberian} to integrate by parts we can write
\begin{equation}
     t^\Delta g(e^{-t}) - 1 = t^{\Delta} \int_{0-}^\infty e^{-v t} dF(v) -1  =  t^{\Delta + 1} \int_{0}^\infty  \left( F(v) - \frac{v^\Delta}{\Gamma(\Delta + 1)} \right) e^{-v t} d v\,.
\end{equation}
Equation \eqref{asymptoticmeasure} says that for any $\epsilon > 0$ there exists a $v_* > 0$ such that $|v^{-\Delta} \Gamma(\Delta + 1) F(v) - 1| \leq \epsilon/2$ for all $v \geq v_*$. Therefore
\begin{equation}
\begin{split}
    |t^{\Delta} g(e^{-t}) - 1| &\leq t^{\Delta + 1} \int_{0}^{v_*}  \left| F(v) - \frac{v^\Delta}{\Gamma(\Delta + 1)} \right| e^{-v t} d v + \frac{\epsilon}{2}\times t^{\Delta + 1} \int_{v_*}^{\infty}  \frac{v^\Delta}{\Gamma(\Delta + 1)} e^{-v t} d v\\
    &\leq t^{\Delta + 1} \int_{0}^{v_*}  \left| F(v) - \frac{v^\Delta}{\Gamma(\Delta + 1)} \right| d v + \frac{\epsilon}{2}\,.\\
\end{split}
\end{equation}
The remaining integral is finite and $t$-independent, so the expression on the last line is less than $\epsilon$ for sufficiently small $t$. We conclude that the $t \searrow 0$ limit of the left-hand side is zero.
\end{proof}

\begin{theorem}[Montel's theorem]
Let $\Omega$ be a domain in $\C$. Consider a family $\cF$ of holomorphic functions on $\Omega$. We suppose that $\cF$ is \emph{locally bounded}, meaning that for every $z \in \Omega$, there is an open set $U \subset \Omega$ and $M > 0$ such that $|f(z)| \leq M$ for all $z \in U$ and all $f \in \cF$. Then every sequence in $\cF$ contains a subsequence that converges uniformly on compact subsets to a holomorphic function.
\end{theorem}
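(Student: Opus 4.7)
The plan is the classical one: derive equicontinuity from local boundedness via Cauchy's integral formula, extract a uniformly convergent subsequence on compact sets by Arzel\`a--Ascoli, and conclude holomorphicity of the limit via Weierstrass's theorem on uniform limits of holomorphic functions.

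First I would upgrade local boundedness to equicontinuity on any compact $K \subset \Omega$. Because $\Omega$ is open, I can choose $r > 0$ so that the closed $2r$-neighbourhood $K'$ of $K$ is still contained in $\Omega$. Local boundedness combined with compactness of $K'$ yields a uniform bound $|f| \leq M$ on $K'$ for all $f \in \cF$. Applying Cauchy's integral formula on the circle of radius $2r$ around any $z_1 \in K$ and subtracting the corresponding expression for $z_2 \in K$ with $|z_1 - z_2| < r$ gives
\[
|f(z_1) - f(z_2)| \leq \frac{M}{r}|z_1 - z_2|,
\]
uniformly in $f \in \cF$. Together with the uniform bound, the Arzel\`a--Ascoli theorem then furnishes, from any given sequence $\{f_n\} \subset \cF$, a subsequence converging uniformly on $K$.

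To promote this to convergence on every compact subset of $\Omega$ simultaneously, I would exhaust $\Omega$ by an increasing sequence of compact sets $K_1 \subset K_2 \subset \cdots$ whose interiors cover $\Omega$ (every open subset of $\C$ admits such an exhaustion), apply the previous step to $K_1$, refine the subsequence on $K_2$, and iterate; the diagonal subsequence converges uniformly on each $K_j$. An arbitrary compact $K \subset \Omega$ is covered by finitely many of the interiors and hence sits in some $K_j$, so uniform convergence holds on $K$. Holomorphicity of the limit $f$ follows from Weierstrass's theorem: uniform convergence on compact subsets allows Cauchy's integral formula to be passed under the limit (equivalently, Morera's theorem applies to $f$), so $f$ is holomorphic on all of $\Omega$.

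No step here is genuinely hard; the argument is textbook and the only place that warrants minor care is the diagonal extraction, where one must verify that the single final subsequence really converges uniformly on \emph{every} compact subset of $\Omega$ and not merely on each $K_j$ individually --- which is immediate once one notes that any compact $K \subset \Omega$ lies in some $K_j$.
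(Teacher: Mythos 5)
Your proof is correct and is the classical textbook argument for Montel's theorem: Cauchy estimates on a shrunken neighbourhood give a uniform Lipschitz constant, hence equicontinuity; Arzel\`a--Ascoli extracts a uniformly convergent subsequence on each member of a compact exhaustion; a diagonal argument produces a single subsequence working on all compacts; and Weierstrass (or Morera) gives holomorphicity of the limit. All the quantitative details you supplied check out: with $|f|\leq M$ on the closed $2r$-neighbourhood of $K$, integrating over the circle of radius $2r$ about $z_1$ and using $|\zeta - z_2|\geq r$ indeed gives $|f(z_1)-f(z_2)|\leq (M/r)|z_1-z_2|$ uniformly in $f$.

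For comparison: the paper does not actually prove Montel's theorem. It states it as a black-box fundamental result, with a citation to \cite{leblguide}, and then uses it to give a short proof of Vitali's theorem. So there is no paper proof to diverge from --- your write-up fills in exactly the standard proof the paper chose to omit, and it is the right one.
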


Montel's theorem is a fundamental result in complex analysis, see for example \cite{leblguide}. It allows for a straightforward proof of Vitali's theorem. Below we give both theorem and proof from \cite{schiff1993normal}, where it is referred to as the Vitali-Porter theorem.

\begin{theorem}[Vitali's theorem]
\label{vitalithm}
Let $\{f_n\}$ be a locally bounded sequence of analytic functions in a domain $\Omega$ such that $\lim_{n \to \infty} f_n(z)$ exists for each $z$ belonging to a set $E \subseteq \Omega$ which has an accumulation point in $\Omega$. Then $\{ f_n \}$ converges uniformly on compact subsets of $\Omega$ to an analytic function.
\end{theorem}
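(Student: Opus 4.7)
The plan is to deduce Vitali's theorem from Montel's theorem, using the identity principle for holomorphic functions to pin down the limit, together with the standard ``every subsequence has a further convergent subsequence with the same limit'' argument to promote subsequential convergence to convergence of the full sequence.

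First, since $\{f_n\}$ is locally bounded in $\Omega$, Montel's theorem applies directly: every subsequence of $\{f_n\}$ admits a further subsequence that converges uniformly on compact subsets of $\Omega$ to a function that is holomorphic on $\Omega$. This step automatically furnishes analyticity of any limit we eventually produce, so I would not have to argue separately that the candidate limit is analytic.

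Second, I would show that any two subsequential uniform-on-compact limits must coincide. Suppose $\{f_{n_k}\} \to f$ and $\{f_{n'_k}\} \to g$ uniformly on compact subsets, with $f$ and $g$ holomorphic on $\Omega$. Both restrict on $E$ to the pointwise limit $L(z) \colonequals \lim_{n \to \infty} f_n(z)$ guaranteed by the hypothesis, so $f(z) = g(z) = L(z)$ for every $z \in E$. Since $\Omega$ is a domain and therefore connected, and since $E$ has an accumulation point in $\Omega$, the identity theorem for holomorphic functions forces $f \equiv g$ on all of $\Omega$.

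Third, I would upgrade the uniqueness of subsequential limits to convergence of the whole sequence via a contradiction argument in the Fr\'echet topology of uniform convergence on compacta. Fix any compact $K \subset \Omega$ and let $f_\infty$ denote a holomorphic function obtained as one subsequential limit in step one. If $\sup_K |f_n - f_\infty|$ did not tend to zero, there would be an $\epsilon > 0$ and a subsequence with $\sup_K |f_{n_k} - f_\infty| \geq \epsilon$ for all $k$. Applying Montel's theorem to this bad subsequence and invoking step two would produce a further subsequence converging uniformly on $K$ to $f_\infty$, contradicting the lower bound. The main obstacle is really step two: one genuinely needs both the accumulation-point hypothesis on $E$ and the connectedness of $\Omega$ built into the word ``domain,'' since otherwise the identity theorem fails and the candidate limit is not uniquely determined. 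The remainder is a clean application of Montel plus a general topological argument.
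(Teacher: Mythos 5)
Your proof is correct and follows essentially the same route as the paper: Montel's theorem supplies subsequential limits, the identity theorem (using the accumulation point of $E$ and connectedness of $\Omega$) shows any two such limits coincide, and a subsequence-extraction argument promotes this to convergence of the full sequence. The only cosmetic difference is that you isolate uniqueness of subsequential limits as a separate step, whereas the paper folds it into a single contradiction argument; the ingredients and logic are the same.
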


\begin{proof}
By Montel's theorem our sequence contains a uniformly convergent subsequence, defining some function $f(z)$ on $\Omega$. Without loss of generality we may assume that $f(z) = 0$. Now suppose the theorem is false, so $\lim_{n \to \infty} f_n(z)$ does not exist or is not uniform on some compact subset. It must then be possible to find an $\epsilon > 0$, a compact subset $K$ of $\Omega$, and another subsequence $\{f_k\}$ such that $\sup_{z \in K} |f_k(z)| \geq \epsilon$ for all $k$. Furthermore, the compact subset $K$ can always be chosen to include an open neighborhood of the accumulation point of $E$. But such a subsequence must have itself a uniformly convergent subsubsequence, again by Montel's theorem. On $E$ the subsubsequence converges to $0$, but since  $\epsilon > 0$ the limiting function cannot be identically zero. This contradicts the identity theorem, which says \cite{leblguide} that the zeroes of a non-identically zero analytic function are isolated.
\end{proof}

\section{Lightcone and conformal blocks}
\label{appendixblocks}
In this appendix we will strengthen the results in the main text from monomials $z^h \zb^\hb$ to full conformal blocks. We first need to discuss an intermediate result for lightcone conformal blocks, based on a Tauberian theorem from \cite{Qiao:2017xif}.

\subsection{Lightcone blocks}
For $v \geq 0$ and $0 < \zb < 1$ we define the \emph{lightcone conformal block} as:
\begin{equation}
    \label{lightconeblock}
    \ka(v,\zb) \colonequals \frac{(\zb/4)^v}{\sqrt{v + 1}} {}_2 F_1(v,v,2v,\zb)\,.
\end{equation}

\begin{theorem}[Tauberian theorem for the conformal bootstrap \cite{Qiao:2017xif}]
    \label{bootstraptheorem}
    Let $F(\Lambda)$ vanish for $\Lambda < 0$, be nondecreasing, continuous form the right and such that the Stieltjes integral
    \begin{equation}
        g(\bar z) = \int_{0-}^\infty \ka(v,\zb) dF(v)
    \end{equation}
    exists for $0 < \zb < 1$. Suppose that for some constant $\Delta > 0$,
    \begin{equation}
        \lim_{\zb \nearrow 1} (1-\zb)^\Delta g(\zb) = 1\,.
    \end{equation}
    Then
    \begin{equation}
        \lim_{\Lambda \to \infty} \Lambda^{-2\Delta} F(\Lambda) = \frac{2\sqrt{\pi}}{\Gamma(\Delta)\Gamma(\Delta + 1)}\,.
    \end{equation}
\end{theorem}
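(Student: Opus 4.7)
The plan is to reduce the theorem to the Hardy--Littlewood Tauberian theorem (Theorem~\ref{hlttheorem}) via a careful analysis of the large-$v$ asymptotics of $\ka(v,\zb)$ in the regime $\zb\nearrow 1$. The appearance of $\Lambda^{-2\Delta}$, rather than the $\Lambda^{-\Delta}$ of Theorem~\ref{hlttheorem}, reflects the fact that the natural scaling variable inside the lightcone block is $v\sqrt{1-\zb}$; effectively the role of $v$ in the Laplace kernel is taken over by $v^2$.

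The first step is to derive a uniform asymptotic for $\ka(v,\zb)$ as $v\to\infty$ and $\zb\nearrow 1$. Starting from the Euler integral
\begin{equation*}
  {}_2F_1(v,v,2v,\zb) \;=\; \frac{\Gamma(2v)}{\Gamma(v)^2}\int_0^1 t^{v-1}(1-t)^{v-1}(1-\zb t)^{-v}\,dt,
\end{equation*}
one localizes near the saddle at $t_\ast = (1-\sqrt{1-\zb})/\zb$, which migrates to the endpoint $t=1$ as $\zb\nearrow 1$. After rescaling around this degenerating saddle one obtains a uniform asymptotic of modified-Bessel type in the variable $v\sqrt{1-\zb}$, together with an explicit $v$-independent overall constant and controllable subleading corrections.

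Second, an Abelian computation checks the candidate conclusion. Using the Mellin-transform identity for $K_0$, namely $\int_0^\infty x^{2\Delta-1} K_0(x)\,dx = 2^{2\Delta-2}\Gamma(\Delta)^2$, one verifies that a measure with $F(\Lambda)\sim C\Lambda^{2\Delta}$ gives $\int \ka(v,\zb)\,dF(v) \sim C'(1-\zb)^{-\Delta}$ for an explicit $C'$ proportional to $C$. Matching to the hypothesis $(1-\zb)^\Delta g(\zb)\to 1$ fixes $C$, and the Legendre duplication formula $\Gamma(2\Delta+1)=2^{2\Delta}\Gamma(\Delta+\tfrac12)\Gamma(\Delta+1)/\sqrt\pi$ reassembles the $\Gamma$ factors into the stated constant $2\sqrt\pi/(\Gamma(\Delta)\Gamma(\Delta+1))$.

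Finally, to upgrade the Abelian matching into a genuine Tauberian statement, I would sandwich $\ka(v,\zb)$ between explicit upper and lower bounds of pure-exponential (or pure-power, after the variable change $u = v^2$) type. Applying Theorem~\ref{hlttheorem} with exponent $2\Delta$ separately to each bounding integral then yields matching one-sided asymptotics for $F(\Lambda)$, which can be combined thanks to the monotonicity of $F$. The main obstacle throughout is the first step: the uniform Bessel asymptotic degenerates at both ends of the scaling regime (very small and very large $v\sqrt{1-\zb}$), and one must patch in separate estimates in those two regions while tracking constants carefully enough for the sandwich argument to close. This delicate interpolation is the heart of the argument in \cite{Qiao:2017xif}, to which we refer for the full technical execution.
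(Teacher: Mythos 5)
The paper does not prove this theorem: it is attributed to \cite{Qiao:2017xif} and the text states plainly ``For the proof we refer to \cite{Qiao:2017xif},'' the only original remark being that the prefactor $4^{-v}/\sqrt{v+1}$ replaces $4^{-v}/\sqrt{v}$ and that this change is immaterial at large $v$. So there is no in-paper proof against which to compare yours line by line; what you have written is a sketch of what a proof might look like, which also ultimately defers to the same reference.

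Your Abelian check is correct and is the most substantive part of the sketch. Using $\ka(v,\zb)\approx \pi^{-1/2}K_0(2v\sqrt{1-\zb})$ for large $v$ with $\zb\to 1$ and the Mellin integral $\int_0^\infty x^{2\Delta-1}K_0(x)\,dx=2^{2\Delta-2}\Gamma(\Delta)^2$, the ansatz $F(v)\sim C\,v^{2\Delta}$ gives $(1-\zb)^\Delta g(\zb)\to \Delta C\,\Gamma(\Delta)^2/(2\sqrt\pi)$, fixing $C=2\sqrt\pi/(\Gamma(\Delta)\Gamma(\Delta+1))$ exactly as stated. This confirms the normalization and is consistent with the paper's own comment that the prefactor is dictated by a Bessel approximation.

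The weak point is the final, Tauberian, step. Sandwiching $\ka(v,\zb)$ between kernels to which Theorem~\ref{hlttheorem} applies does not straightforwardly close: if $L_-\le\ka\le L_+$, then $g_-(\zb)\le g(\zb)\le g_+(\zb)$, but the hypothesis $(1-\zb)^\Delta g(\zb)\to 1$ tells you nothing about the limits of $(1-\zb)^\Delta g_\pm(\zb)$, and Theorem~\ref{hlttheorem} requires a genuine limit, not a one-sided inequality, to be invoked at all. Moreover, after the substitution $u=v^2$ the kernel behaves like $\exp(-2\sqrt{u(1-\zb)})$ rather than $\exp(-u t)$, so it is not a pure Laplace kernel in any variable and $K_0$ has a logarithmic rather than bounded behaviour near the origin; there is no uniform pure-exponential envelope with matching constants. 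The actual Tauberian argument for the Bessel kernel in \cite{Qiao:2017xif} therefore cannot be a naive sandwich-and-apply-Hardy--Littlewood argument. You do flag the uniform-asymptotics matching as ``the heart of the argument'' and refer to \cite{Qiao:2017xif}, which is fair; just be aware that the gap you defer is precisely the Tauberian core, not merely technical bookkeeping around the Bessel uniformization.
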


For the proof we refer to \cite{Qiao:2017xif}. We note the unconventional prefactor $4^{-v}/\sqrt{v + 1}$ in equation \eqref{lightconeblock}, which arises from a Bessel function approximation to the lightcone conformal blocks at large $v$.\footnote{This unconventional prefactor was $4^{-v}/\sqrt{v}$ in \cite{Qiao:2017xif}. We changed it in order to avoid notational hassles if $F(0) \neq 0$. This change does not affect the claimed result since the two prefactors are smooth and become equal for large $v$.}

Our analysis begins with the substitution of $\ka(\hb,\zb)$ instead of the monomials $\zb^\hb$ in equation \eqref{zbarseries}. This produces a decomposition in lightcone conformal blocks:
\begin{align}
    \label{lightconeseries}
    g(w,\zb) = \zb^{- \Delta} \sum_{n} f^\text{(lc)}_n(w)\, \ka(\hb_n,\zb) , \qquad \qquad 
    f^\text{(lc)}_n(w) \colonequals \sum_m d_{m,n}^\text{(lc)} e^{(h_{m,n} - \Delta) w}\,.
\end{align}
Any function with the properties described in section \ref{sec:definitions} admits such a decomposition, but the CFT axioms also impose that the coefficients $d^\text{(lc)}_{m,n}$ should again be non-negative, which is a non-trivial constraint.

If we now define
\begin{equation}
    F_\Lambda^\text{(lc)}(w) \colonequals \frac{1}{2\sqrt{\pi}} \Lambda^{-2\Delta}  \Gamma(\Delta)\Gamma(\Delta + 1) \sum_{\{n \,:\, {\bar h}_n \leq \Lambda\}} f^\text{(lc)}_n(w)\,,
\end{equation}
then we claim that, for all $w$ in the left half plane,
\begin{equation}
    \lim_{\Lambda \to \infty} F_\Lambda^\text{(lc)}(w) = (1-e^w)^{-\Delta}\,.
\end{equation}
The argument is the same as before: pointwise for real $w < 0$ we use theorem \ref{bootstraptheorem} and for general complex $w$ we can use Vitali's theorem. Taking the Fourier transform then implies that the density
\begin{equation}
    H_\Lambda^\text{(lc)}(\eta) \colonequals \sum_{\{ n \,:\, {\bar h}_n \leq \Lambda \}} \sum_m \, d^\text{(lc)}_{m,n} \delta(\eta - h_{m,n})
\end{equation}
converges to the generalized free field value: for any $\phi \in \cD(\R)$,
\begin{equation}
    \label{lightconedensitylimit}
    \lim_{\Lambda \to \infty} \Lambda^{-2\Delta} \int d\eta \, \phi(\eta) H_\Lambda^\text{(lc)}(\eta) =  \sum_{m = 0}^\infty\frac{2\sqrt{\pi} \Gamma(m + \Delta)}{m! \Gamma(\Delta)^2 \Gamma(\Delta + 1)} \phi(\Delta + m)\,.
\end{equation}
There must therefore be, for any $k \in \N_0$ and $\epsilon > 0$, infinitely many lightcone conformal blocks in the decomposition \eqref{lightconeseries} for which the corresponding factor of $z^h$ lies in the interval $[\Delta + k - \epsilon, \Delta + k + \epsilon]$.

\subsection{Conformal blocks}
The functions $G(z,\zb)$ that concern us also admit another decomposition, into conformal blocks $\cG_{h,\hb}(z,\zb)$:
\begin{equation}
    \label{conformalblockdecomposition}
    G(z,\zb) = z^{-\Delta} \zb^{-\Delta} \sum_q d_q^\text{(p)} \mathcal{G}_{h_q,\hb_q}(z,\zb)\,,
\end{equation}
with coefficients $d_q^{\text{(p)}}$ which are again non-negative by the CFT axioms.

There are many ways to define conformal blocks. We will use the results of \cite{Hogervorst:2016hal} and write them as:
\begin{equation}
    \label{decompositionofconformalblock}
    \mathcal{G}_{h,\hb}(z,\zb) = \sum_{n = 0}^\infty z^{h + n} \sum_{m = -n}^n \alpha_{n,m}(h,\hb) \, \ka(\hb + m,\zb)\,.
\end{equation}
The sums are again absolutely convergent for $0 < |z|, |\zb| < 1$. The labels $h$ and $\hb$ are the powers of the leading monomial, $z^h \zb^{\hb}$, in the double limit where we first send $z \to 0$ and then $\zb \to 0$. The term with $q = 0$ in equation \eqref{conformalblockdecomposition} has $h_0 = \hb_0 = 0$ and $d_0^\text{(p)} = 1$ and corresponds to the identity conformal block, $\cG_{0,0}(z,\zb) = 1$. For $q > 0$ we assume that all the $h_q$ and $\hb_q$ are compatible with the axioms for a unitary CFT with a twist gap, which in particular implies that $\hb_q - h_q \in \N_0$ and $h_q > \hat h$ for some $\hat h > 0$.
% The conformal blocks are symmetric, $\cG_{h,\hb}(z,\zb) = \cG_{h,\hb}(\zb,z)$. From this one may deduce that $\alpha_{n,m}(h,\hb) = 0$ unless $m \geq h - \hb$, since then the leading monomial is $\zb^{h} z^\hb$ in the double limit where we first send $\zb \to 0$ and then $z \to 0$. 

Note that conformal blocks are symmetric, $\cG_{h,\hb}(z,\zb) = \cG_{h,\hb}(\zb,z)$, and therefore $\alpha_{n,m}(h,\hb) = 0$ for $m < h -\hb$. The other coefficients can be determined from a recursion relation \cite{Dolan:2003hv} up to an overall normalization, which we fix by choosing $\alpha_{0,0}(h,\hb) = 1$. They are then rational functions modulo a square root that arises from the unconventional prefactor in equation \eqref{lightconeblock}. Importantly, the $\alpha_{m,n}(h,\hb)$ remain finite as $\hb \to \infty$ and we can write:
\begin{equation}
    \label{structureofalpha}
    \alpha_{n,m}(h,\hb) = \alpha_{n,m}(h,\infty) + \delta \alpha_{n,m}(h,\hb)
\end{equation}
with $\alpha_{n,m}(h,\infty) \geq 0$ and with
\begin{equation}
    \label{deltaalpha}
    |\delta \alpha_{n,m}(h,\hb)| < \frac{C_{n,m}(h)}{1 + \hb}\,.
\end{equation}
Below we will need that $h \mapsto C_{n,m}(h)$ can be taken continuous and that $h \mapsto \alpha_{n,m}(h,\infty)$ is smooth.\footnote{One way to see that the $\alpha_{n,m}(h,\hb)$ do not diverge as $\hb \to \infty$ is because that would be inconsistent with what follows.}

Given the conformal block decomposition in equation \eqref{conformalblockdecomposition}, we define the \emph{primary twist density} as:
\begin{equation}
    H^\text{(p)}_\Lambda(\eta) = \sum_{\{ q \,:\, {\bar h}_q \leq \Lambda \}} d_q^\text{(p)} \delta(\eta - h_q)\,.
\end{equation}
For finite $\Lambda$ this distribution has a well-defined action on any compactly supported test function. In $\cD'(\R)$ the following theorem shows that the large $\Lambda$ limit is universal.

\begin{theorem}
Given the above properties of conformal blocks,
\begin{equation}
    \label{targetlimitprimaries}
    \lim_{\Lambda \to \infty}\Lambda^{-2\Delta} \int d\eta\, \phi(\eta) H^\text{(p)}_\Lambda(\eta)
\end{equation}
is finite for any $\phi \in \cD(\R)$ and completely determined by equation \eqref{lightconedensitylimit}.
\end{theorem}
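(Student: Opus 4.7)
My plan is to reduce the primary twist density to the lightcone twist density by substituting the conformal block expansion \eqref{decompositionofconformalblock} into the primary decomposition \eqref{conformalblockdecomposition}. This expresses the lightcone coefficient at position $(h, \hb)$ as a non-negative combination of primary coefficients $d_q^\text{(p)}$ with shifts $(k, j)$ ranging over $k \geq 0$ and $|j| \leq k$. Pairing with $\phi \in \cD(\R)$ and reordering sums yields
\[
\int d\eta\, \phi(\eta) H^\text{(lc)}_\Lambda(\eta) = \sum_{k \geq 0,\, |j| \leq k}\, \sum_{q\,:\, \hb_q + j \leq \Lambda} d_q^\text{(p)}\, \alpha_{k, j}(h_q, \hb_q)\, \phi(h_q + k).
\]
Since $h_q \geq 0$ and $\phi$ has compact support, only finitely many $(k, j)$ contribute, and the $(k, j) = (0, 0)$ piece collects $\int \phi\, H^\text{(p)}_\Lambda$ up to a $\delta\alpha_{0, 0}$ correction by virtue of $\alpha_{0, 0}(h, \infty) = 1$.

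The second step is to bound the sizes of these sums. From \eqref{deltaalpha}, $\alpha_{0, 0}(h, \hb) \geq 1/2$ once $\hb$ is large enough in terms of $h$. Combined with the non-negativity of every other primary contribution to $d^\text{(lc)}_{(h, \hb)}$, this forces $d_q^\text{(p)} \leq 2\, d^\text{(lc)}_{(h_q, \hb_q)}$ in that regime; applying \eqref{lightconedensitylimit} to a non-negative test function supported slightly above $[0, B]$ then yields the polynomial bound $\sum_{q\,:\, \hb_q \leq \Lambda,\, h_q \leq B} d_q^\text{(p)} = O(\Lambda^{2\Delta})$. With this bound, \eqref{deltaalpha} and Abel summation control the full $\delta\alpha$ contribution by $O(\Lambda^{2\Delta - 1})$, which is subleading. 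Writing $\phi_{k, j}(\eta) \colonequals \alpha_{k, j}(\eta, \infty)\, \phi(\eta + k)$, which is smooth and compactly supported by smoothness of $h \mapsto \alpha_{k, j}(h, \infty)$, we then arrive at
\[
\int \phi\, H^\text{(lc)}_\Lambda = \sum_{k, j} \int \phi_{k, j}\, H^\text{(p)}_{\Lambda - j} + O(\Lambda^{2\Delta - 1}).
\]

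The last step is induction on the upper edge $B$ of $\mathrm{supp}\,\phi$. In the base case $B < \hat h$ only the identity at $h_q = 0$ can contribute, and $\Lambda^{-2\Delta} \phi(0) \to 0$. For the inductive step, use $\alpha_{0, 0}(\cdot, \infty) = 1$ to isolate $\int \phi\, H^\text{(p)}_\Lambda$ on one side, with $\int \phi\, H^\text{(lc)}_\Lambda$ and all $(k, j) \neq (0, 0)$ terms on the other. Each $\phi_{k, j}$ with $k \geq 1$ is supported in $(-\infty, B - 1]$, so the inductive hypothesis supplies $\lim_\Lambda \Lambda^{-2\Delta} \int \phi_{k, j}\, H^\text{(p)}_{\Lambda - j}$; the shift $\Lambda \mapsto \Lambda - j$ is harmless because $\Lambda^{-2\Delta}(\Lambda - j)^{2\Delta} \to 1$. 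This establishes existence of the limit and exhibits it as a universal linear functional of $\phi$ depending only on the known $I^\text{(lc)}_\phi$ implied by \eqref{lightconedensitylimit}; running the same recipe on $G(z, \zb) = (1 - z)^{-\Delta}(1 - \zb)^{-\Delta}$, whose primary decomposition is explicitly that of a generalized free field, then pins down the numerical value. The main obstacle is the polynomial upper bound on the primary density and the attendant control of the $\delta\alpha$ remainder; once these are in hand, the induction is routine bookkeeping.
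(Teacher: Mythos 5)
Your plan is essentially the paper's: expand each conformal block in lightcone blocks, rewrite $H^\text{(lc)}_\Lambda$ as a sum over the primaries shifted by $(n,m)$, and induct on the upper edge of the test function's support with the $\delta\alpha$ remainder controlled separately. The one genuine gap lies in your ``second step,'' the a priori polynomial bound $\sum_{q\,:\,\hb_q\leq\Lambda,\,h_q\leq B}d_q^\text{(p)} = O(\Lambda^{2\Delta})$. You obtain it by combining $\alpha_{0,0}(h,\hb)\geq 1/2$ with ``the non-negativity of every other primary contribution to $d^\text{(lc)}_{(h,\hb)}$'' to force $d_q^\text{(p)}\leq 2\,d^\text{(lc)}_{(h_q,\hb_q)}$. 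But the stated assumptions give $d^\text{(lc)}_{m,n}\geq 0$ and $\alpha_{n,m}(h,\infty)\geq 0$ only; nothing asserts $\alpha_{n,m}(h,\hb)\geq 0$ at finite $\hb$. A primary $q'$ that feeds into $(h,\hb)$ through some $(n,m)\neq(0,0)$ could in principle contribute with a negative sign (if $\delta\alpha_{n,m}<0$ and $\alpha_{n,m}(h_{q'},\infty)=0$), and then the pointwise inequality $d_q^\text{(p)}\leq 2\,d^\text{(lc)}_{(h_q,\hb_q)}$ does not follow.

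The good news is that the step is also unnecessary, and removing it makes your argument coincide with the paper's. The paper normalizes $\alpha_{0,0}(h,\hb)=1$ for \emph{all} $\hb$ (not only at $\hb=\infty$), so $\delta\alpha_{0,0}\equiv 0$ and the $(n,m)=(0,0)$ piece of your identity is exactly $\int\phi\,H^\text{(p)}_\Lambda$ with no correction. The remaining $\delta\alpha$ terms all carry $n\geq 1$ and hence act through test functions of the form $\eta\mapsto C_{n,m}(\eta)\,|\phi(\eta+n)|$, supported strictly below the edge of $\mathrm{supp}\,\phi$. Your induction hypothesis therefore already supplies the $O(\Lambda^{2\Delta})$ control needed on those sums — this is precisely what the paper's lemma uses, with a split of the $\hb_q$-sum at $\sqrt{\Lambda}$ in place of Abel summation (either bookkeeping gives a subleading bound). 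With this correction, your base case, inductive step, and the final computation on $G(z,\zb)=(1-z)^{-\Delta}(1-\zb)^{-\Delta}$ all match the paper.
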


It will be useful to introduce
\begin{equation}
    \cD(\eta_*) \colonequals \cD((-\infty, \eta_*))\,,
\end{equation}
which is the space of smooth test functions whose support is a compact subset of $(-\infty, \eta_*)$.

\begin{proof}
The structure of the conformal block given in equation \eqref{decompositionofconformalblock} and the decomposition \eqref{structureofalpha} of the coefficients $\alpha_{n,m}(h,\hb)$ allow us to write:
\begin{equation}
    \label{lightconetoprimaries}
    H^\text{(p)}_{\Lambda}(\eta) =
    H^\text{(lc)}_\Lambda(\eta) -
    \delta_I H^\text{(p)}_\Lambda(\eta) - \delta_{II} H^\text{(p)}_\Lambda(\eta)\,,
\end{equation}
with
\begin{equation}
\begin{split}
    \label{deltaHs}
    \delta_I H^\text{(p)}_\Lambda(\eta) &= \sum_{n = 1}^\infty \sum_{m = -n}^n \sum_{\{ q \,:\, {\bar h}_q \leq \Lambda - m \}} d_q^\text{(p)} \, \alpha_{n,m}(h_q,\infty)  \delta(\eta - h_q - n)\,, \\
    \delta_{II} H_{\Lambda}^\text{(p)}(\eta) &= \sum_{n=1}^\infty \sum_{m = -n}^n \sum_{\{ q \,:\, {\bar h}_q \leq \Lambda - m\}} d_q^\text{(p)} \, \delta \a_{n,m}(h_q,\hb_q) \, \delta(\eta - h_q - n) \,.
\end{split}
\end{equation}
Only finitely many terms of the infinite sums survive whenever $\delta_I H^\text{(p)}_\Lambda$ or $\delta_{II} H^\text{(p)}_\Lambda$ act on a compactly supported test function. To see this, suppose that $\phi \in \cD(\eta_*)$ and recall that $h_q, \hb_q \geq 0$. Then all the non-zero terms must obey
\begin{equation}
    \label{nhconstraints}
    1 \leq n < \eta_*\,, \qquad 0 \leq h_q + \hb_q < \Lambda + \eta_*\,.
\end{equation}
These are finite in number since $\lim_{q \to \infty} h_q + \hb_q = \infty$ by the discrete spectrum assumption.

Our proof will proceed by induction. Equation \eqref{nhconstraints} implies that $ \delta_I H^\text{(p)}_\Lambda(\eta)$ and $ \delta_{II} H^\text{(p)}_\Lambda(\eta)$ do not contribute if $\eta_* \leq 1$. So for this subspace of test functions the theorem is trivial:
\begin{equation}
    \label{inductionstartingpoint}
    \forall \phi \in \cD(1)\,, \qquad \qquad  \lim_{\Lambda \to \infty} \Lambda^{-2\Delta} \int d\eta \, \phi(\eta) H^\text{(p)}_{\Lambda}(\eta) = \lim_{\Lambda \to \infty} \Lambda^{-2\Delta}\int d\eta \, \phi(\eta)
    H^\text{(lc)}_\Lambda(\eta)\,.
\end{equation}
We will now adapt this result to $\phi \in \cD(\eta_*)$ for any real $\eta_*$, by increasing $\eta_*$ in steps of size $1/2$.

Suppose then that that equation \eqref{targetlimitprimaries} is known for all test functions in $\cD(\eta_*)$. To obtain the result for $\phi \in \cD(\eta_* + 1/2)$ we apply both sides of equation \eqref{lightconetoprimaries} to such a test function. The right-hand side contains three terms. The first term is the lightcone twist density for which the large $\Lambda$ behavior is given in equation \eqref{lightconedensitylimit}. The second term yields
\begin{equation}
    \int d\eta \, \phi(\eta) \delta_I H^\text{(p)}_\Lambda(\eta) = \sum_{n = 1}^{\lfloor \eta_* + 1/2 \rfloor} \sum_{m = -n}^n \int d\eta'\, \a_{n,m}(\eta',\infty) \phi(\eta' + n) H_{\Lambda - m}^\text{(p)}(\eta')\,.
\end{equation}
The large $\Lambda$ behavior of each summand on the right-hand side is known by the induction hypothesis, since the maps $\eta \mapsto \a_{n,m}(\eta,\infty) \phi(\eta + n)$ are elements of $\cD(\eta_*)$.

This leaves us with the last term on the right-hand side of equation \eqref{lightconetoprimaries}. We know that the contribution of $\delta_{II} H^\text{(p)}_\Lambda(\eta)$ vanishes for $\phi \in \cD(1)$, and the next lemma then implies that it remains subleading for any $\phi \in \cD(\R)$. This establishes the theorem.
\end{proof}

\begin{lemma}
    Let $\eta_* > 0$ and $\Delta > 0$ and suppose that for all $\chi \in \cD(\eta_*)$,
    \begin{equation}
        \label{primarylemmahypothesis}
        \lim_{\Lambda \to \infty} \Lambda^{-2\Delta} \int d\eta \, \chi(\eta) H^\text{(p)}_\Lambda(\eta) < \infty\,.
    \end{equation}
    Then for all $\phi \in \cD(\eta_* + 1/2)$,
    \begin{equation}
        \lim_{\Lambda \to \infty} \Lambda^{-2\Delta} \int d\eta \, \phi(\eta) \delta_{II} H^\text{(p)}_\Lambda(\eta) = 0\,.
    \end{equation}
\end{lemma}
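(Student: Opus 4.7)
The plan is to bound $\int \phi\,\delta_{II} H^\text{(p)}_\Lambda$ in absolute value, exploiting the decay $|\delta \alpha_{n,m}(h,\hb)| < C_{n,m}(h)/(1+\hb)$ to extract one extra power of $\hb_q$ of suppression beyond the naive $\Lambda^{2\Delta}$ growth, and then reducing everything to the induction hypothesis \eqref{primarylemmahypothesis}.

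First I would estimate
\begin{equation*}
\left|\int d\eta\, \phi(\eta)\, \delta_{II} H^\text{(p)}_\Lambda(\eta)\right| \leq \sum_{n,m} \sum_{\{q\,:\, \hb_q \leq \Lambda - m\}} d_q^\text{(p)}\, \frac{C_{n,m}(h_q)\, |\phi(h_q+n)|}{1+\hb_q}\,,
\end{equation*}
and observe that because $\phi \in \cD(\eta_* + 1/2)$ and $h_q \geq 0$, only finitely many pairs $(n,m)$ with $1 \leq n \leq \lfloor \eta_* + 1/2 \rfloor$ and $|m| \leq n$ contribute, uniformly in $\Lambda$. For each such pair the function $\psi_{n,m}(\eta) \colonequals C_{n,m}(\eta)\,|\phi(\eta+n)|$ is continuous, nonnegative, and compactly supported in $(-\infty,\eta_*-1/2)$, strictly inside $(-\infty,\eta_*)$.

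Next, I would pick a smooth majorant $\chi_{n,m} \in \cD(\eta_*)$ with $\chi_{n,m}(\eta) \geq \psi_{n,m}(\eta)$ pointwise, define $\tilde U_{n,m}(u) \colonequals \sum_{\{q\,:\,\hb_q \leq u\}} d_q^\text{(p)} \chi_{n,m}(h_q)$, and apply the hypothesis \eqref{primarylemmahypothesis} to $\chi_{n,m}$. Since $d_q^\text{(p)}\chi_{n,m}(h_q)\geq 0$, $\tilde U_{n,m}$ is nondecreasing, and the hypothesis forces a uniform bound $\tilde U_{n,m}(u) \leq K_{n,m}\,u^{2\Delta}$ for $u \geq 1$. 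Stieltjes integration by parts then converts the $1/(1+\hb_q)$-weighted sum into
\begin{equation*}
\frac{\tilde U_{n,m}(\Lambda-m)}{1+\Lambda-m} + \int_0^{\Lambda-m} \frac{\tilde U_{n,m}(u)}{(1+u)^2}\, du\,,
\end{equation*}
whose first term is $O(\Lambda^{2\Delta-1})$ and whose integral is $O(\Lambda^{2\Delta-1})$, $O(\log\Lambda)$, or $O(1)$ depending on whether $2\Delta$ is above, at, or below $1$ --- in every case $o(\Lambda^{2\Delta})$. Summing the finitely many $(n,m)$ contributions and dividing by $\Lambda^{2\Delta}$ yields the lemma.

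The main obstacle is the approximation step: the bound \eqref{deltaalpha} only ensures $C_{n,m}(h)$ is continuous in $h$, so $\psi_{n,m}$ is merely continuous, whereas \eqref{primarylemmahypothesis} is stated for smooth test functions. This is precisely why the induction advances in half-integer steps --- the $1/2$ buffer ensures that the support of $\psi_{n,m}$ lies strictly inside $(-\infty,\eta_*)$, leaving enough room to interpose a smooth majorant in $\cD(\eta_*)$. Once that majorant is in place the remaining Abel-summation bookkeeping is routine.
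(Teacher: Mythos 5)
Your argument is correct, and the skeleton — bound by the $1/(1+\hb_q)$-weighted sum, interpose a smooth majorant $\chi_{n,m}\in\cD(\eta_*)$ (which is exactly what the $1/2$ buffer is for), then show the extra $1/(1+\hb_q)$ factor kills the $\Lambda^{2\Delta}$ growth — matches the paper's strategy. Where you diverge is in the final estimate. The paper avoids Stieltjes integration by parts entirely: it simply splits the sum over $q$ at $\hb_q = \sqrt{\Lambda}$, bounding the low-$\hb_q$ part by $\int\chi\,H^\text{(p)}_{\sqrt{\Lambda}} = O(\Lambda^{\Delta})$ (using $1/(1+\hb_q)\leq 1$) and the high-$\hb_q$ part by $\frac{1}{1+\sqrt{\Lambda}}\int\chi\,H^\text{(p)}_{\Lambda} = O(\Lambda^{2\Delta-1/2})$, both $o(\Lambda^{2\Delta})$. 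Your Abel-summation route, defining $\tilde U_{n,m}(u)=\int\chi_{n,m}\,H^\text{(p)}_u$ and writing the weighted sum as $\int_0^{\Lambda-m} (1+u)^{-1}\,d\tilde U_{n,m}(u)$, is also valid and in fact yields a sharper error estimate ($O(\Lambda^{2\Delta-1})$, $O(\log\Lambda)$, or $O(1)$ instead of $O(\Lambda^{2\Delta-1/2})$), at the cost of a case split in $\Delta$ and slightly more bookkeeping. For the purpose of this lemma — showing the contribution is $o(\Lambda^{2\Delta})$ — either works; the paper's $\sqrt{\Lambda}$ split is the more economical choice since no extra precision is needed. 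One small point worth spelling out in your version: passing from ``$\lim_{u\to\infty} u^{-2\Delta}\tilde U_{n,m}(u)$ exists and is finite'' to the uniform bound $\tilde U_{n,m}(u)\leq K_{n,m}u^{2\Delta}$ for $u\geq 1$ uses that $\tilde U_{n,m}(u)$ is finite for each fixed $u$, which in turn relies on the discreteness of the spectrum together with the compact support of $\chi_{n,m}$; this is implicit in the paper as well, but your IBP makes explicit use of $\tilde U_{n,m}$ on all of $[0,\Lambda]$ so it deserves a sentence.
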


\begin{proof}
Let $\phi \in \cD(\eta_* + 1/2)$ and consider:
\begin{equation}
    \int d\eta\, \phi(\eta) \delta_{II} H_{\Lambda}^\text{(p)}(\eta) =  \sum_{n=1}^{\lfloor \eta_* + 1/2 \rfloor} \sum_{m = -n}^n \sum_{\{ q \,:\, {\bar h}_q + m \leq \Lambda \}} d_q^\text{(p)} \, \delta \a_{n,m}(h_q,\hb_q) \, \phi(h_q + n)\,. 
\end{equation}
Since only finitely many terms contribute, we can immediately write:
\begin{equation}
    \label{absHlambdapfirststep}
    \begin{split}
    \left| \int d\eta\, \phi(\eta) \delta_{II} H_{\Lambda}^\text{(p)}(\eta) \right| 
    &\leq \sum_{n=1}^{\lfloor \eta_* + 1/2 \rfloor} \sum_{m = -n}^n \sum_{\{ q \,:\, {\bar h}_q + m \leq \Lambda\}}  \frac{d_q^\text{(p)}}{1 + \hb_q} \, C_{n,m}(h_q) | \phi(h_q + n) |\,.
    \end{split}
\end{equation}
For fixed $n$ and $m$ consider now the function $h \mapsto C_{n,m}(h) |\phi(h + n)|$. We will replace this non-negative continuous function, compactly supported below $\eta_* - 1/2$, with a non-negative test function $\chi_{n,m} \in \cD(\eta_*)$ such that, for all $\eta \in \R$,
\begin{equation}
    C_{n,m}(\eta) |\phi(\eta + n)| \leq \chi_{n,m}(\eta)\,.
\end{equation}
Equation \eqref{absHlambdapfirststep} is then bounded from above by a finite sum of terms of the form:
\begin{equation}
    \int d\eta \, \chi(\eta) \sum_{\{ q \,:\, {\bar h}_q \leq \Lambda - m\}}  \, \frac{d_q^\text{(p)}}{1 + \hb_q} \delta(\eta - h_q)\,
\end{equation}
for non-negative real $\chi \in \cD(\eta_*)$. Splitting the sum at $\sqrt{\Lambda}$, we can further bound this as:
\begin{equation}
\begin{split}
    \int d\eta \, \chi(\eta) \sum_{\{ q \,:\, {\bar h}_q \leq \Lambda\}} \, \frac{ d_q^\text{(p)} }{1 + \hb_q} \delta(\eta - h_q) \leq \int d\eta \, \chi(\eta) H^\text{(p)}_{\sqrt{\Lambda}}(\eta) + \frac{1}{1 + \sqrt{\Lambda}} \int d\eta \, \chi(\eta) H^\text{(p)}_{\Lambda}(\eta)\,.
\end{split}
\end{equation}
But when multiplied with $\Lambda^{-2 \Delta}$ both terms vanish as $\Lambda \to \infty$, by our hypothesis \eqref{primarylemmahypothesis}.
\end{proof}

% We have established that the OPE density of primary operators diverges at large $\Lambda$ in a completely universal manner, which therefore must be equal to generalized free field theory behavior.

In the generalized free field theory the coefficients $d_q^\text{(p)}$ and spectrum $(h_q,\hb_q)$ are explicitly known, see equation (43) in \cite{Fitzpatrick:2011dm}. The large $\Lambda$ limit of the primary twist density is given by:
\begin{equation}
    \label{Hplimitfinal}
    \lim_{\Lambda \to \infty} \Lambda^{-2\Delta} \int d\eta \, \phi(\eta) H_\Lambda^\text{(p)}(\eta) = \sum_{n = 0}^\infty \frac{2 \sqrt{\pi } \Gamma(\Delta -\frac{d}{2} +1 + n)^2 \Gamma(2 \Delta -d+2n+1)}{n! \Gamma (\Delta )\Gamma(\Delta + 1) \Gamma(\Delta -\frac{d}{2} +1)^2 \Gamma(2 \Delta -d+n+1)} \phi(\Delta + n)\,.
\end{equation}
Our theorem states that this answer is universal and thus must hold for any function satisfying the assumptions spelled out in section \ref{sec:definitions} and with a conformal block decomposition with positive coefficients.

Consider once more $\phi$ compactly supported in the interval $[\Delta + k - \epsilon, \Delta+k+\epsilon]$ for $k \in \N_0$ and $\epsilon > 0$. If $\Delta > d/2 - 1$ and $\phi(\Delta + k) \neq 0$ then the right-hand side of equation \eqref{Hplimitfinal} is not zero, so the primary twist density acting on such a $\phi$ diverges like $\Lambda^{2\Delta}$. This can only happen if there are infinitely many conformal blocks $\cG_{h,\hb}(z,\zb)$ with $h$ in this interval.

\bibliographystyle{utphys}
\bibliography{biblio}

\end{document}